\newtheorem{problem}{Problem}
\newtheorem{theorem}{Theorem}
\newtheorem{corollary}{Corollary}
\newtheorem{proposition}{Proposition}
\theoremstyle{plain}
\theoremstyle{definition}
\theoremstyle{remark}
\newcommand{\R}{\mathbb{R}}
\title{\LARGE \bf
Scalable Computation of Controlled Invariant Sets for Discrete-Time Linear Systems with Input Delays
}
\author{Zexiang Liu \hspace{1cm}  Liren Yang \hspace{1cm} Necmiye Ozay% <-this % stops a space
	\thanks{The authors are with the Dept. of Electrical Engineering and Computer Science, Univ. of Michigan, Ann Arbor,
		MI 48109, USA. Emails:
		{\tt\small \{zexiang, yliren, necmiye\}@umich.edu}. This work was supported in part by NSF Awards ECCS-1553873 and CNS-1931982, and Toyota Research Institute (TRI). TRI provided funds to assist the authors with their research but this article solely reflects the opinions and conclusions of its authors and not TRI or any other Toyota entity.
		}%
}
\begin{document}

\maketitle
\thispagestyle{empty}
\pagestyle{empty}

%%%%%%%%%%%%%%%%%%%%%%%%%%%%%%%%%%%%%%%%%%%%%%%%%%%%%%%%%%%%%%%%%%%%%%%%%%%%%%%%
\begin{abstract}
In this paper, we first propose a method that can efficiently compute the maximal robust controlled invariant set for discrete-time linear systems with pure delay in input. The key to this method is to construct an auxiliary linear system (without delay) with the same state-space dimension of the original system in consideration and to relate the maximal invariant set of the auxiliary system to that of the original system. When the system is subject to disturbances, guaranteeing safety is harder for systems with input delays. Ability to incorporate any additional information about the disturbance becomes more critical in these cases. Motivated by this observation, in the second part of the paper, we generalize the proposed method to take into account additional preview information on the disturbances, while maintaining computational efficiency. Compared with the naive approach of constructing a higher dimensional system by appending the state-space with the delayed inputs and previewed disturbances, the proposed approach is demonstrated to scale much better with the increasing delay time.
\end{abstract}

%%%%%%%%%%%%%%%%%%%%%%%%%%%%%%%%%%%%%%%%%%%%%%%%%%%%%%%%%%%%%%%%%%%%%%%%%%%%%%%%
\section{Introduction}

 %!TEX root = ms.tex

As more and more autonomous functionality is introduced in human-cyber-physical systems, such as passenger vehicles and aircraft, guaranteeing their safe and correct operation becomes a major concern. From a control's perspective, safety guarantees can be obtained by computing the so called robust controlled invariant sets (RCIS) \cite{bertsekas1972infinite, kerrigan2000invariant, rungger2017computing}, which characterize the set of states from which one can find safe control actions such that the system trajectory is guaranteed to avoid the unsafe states indefinitely. However, the curse of dimensionality limits our ability to compute RCIS for high-dimensional systems.

Recently, robust controlled invariant sets are used to design controllers and supervisors for vehicle safety systems \cite{nilsson2015correct, smith}. While there exist simple linear dynamical models for vehicle control, when deploying such controllers on actual vehicles \cite{nilsson2019provably}, we have realized that there is a non-negligible time delay on the input signal, caused either by the delays in vehicle communication buses or by the dynamics of low-level actuators. Although systems with delays can be equivalently represented by higher-dimensional systems without delay by augmenting the system with additional states corresponding to delayed inputs, computing invariant sets for such high-dimensional systems is challenging. On the other hand, this equivalent augmented system is very structured and our goal in this work is to exploit this structure to compute invariant sets for systems with input delays in a scalable manner. A second concern for systems with input delays is that when the system is subject to additional disturbances, it becomes harder to guarantee invariance with a delayed input. On the other hand, future values of some of the external disturbances can be previewed by the controller \cite{peng1993preview}. In the second part of the paper, we consider the problem of incorporating such preview information in invariant set computation, without compromising scalability.

There are several results in the literature related to the invariance problem for time-delay systems. Lyapunov-based methods are recently explored by \cite{orosz2019safety, jankovic2018control}. In particular, safety barrier functionals are proposed in \cite{orosz2019safety} for general continuous-time nonlinear autonomous time-delay systems, and ``Artstein model reduction" method \cite{artstein1982linear} is used in \cite{jankovic2018control} for computing control barrier functions for continuous-time linear systems with input delay. Even though \cite{jankovic2018control} considers continuous-time systems only, it is closely related to our approach as our construction of the auxiliary system also uses a model reduction technique similar to Artstein's method but in a discrete-time setting. For methods based on discrete-time linear systems, the paper \cite{olaru2008predictive} tackles with time-varying input delay using polytope approximations, and then computes the maximal output admissible set of the closed-loop system stabilized by a linear feedback law. The papers \cite{lombardi2011positive, lombardi2012cyclic}, on the other hand, compute invariant sets for discrete-time autonomous time-delay systems with different levels of conservativeness. However, the above methods are mainly designed for systems without disturbance, have no guarantees on the maximality of the resulting controlled invariant set, and cannot incorporate preview information, thus are not applicable to our problem.

The main contributions of this paper are as follows: 
\begin{itemize}
	\item  We construct a delay-free auxiliary system by predicting the future states in $\tau$ steps (where $\tau$ is the input delay), and then show that the computation of the maximal RCIS for the high-dimensional equivalent of the delayed system can be reduced to the computation of the maximal RCIS of the low-dimensional auxiliary system and $ \tau+2 $ set intersection operations (Section \ref{sec:method}).
	\item  We extend the proposed method for systems with both input delay and disturbance preview, where the preview on disturbance can mitigate the difficulty in controlling systems with large input delays (Section \ref{sec:extension}).   
	\item We provide two examples to show the efficiency and utility of the proposed method (Section \ref{sec:example}).
\end{itemize}

\textbf{Notation:} Given two sets $S_1$ and $S_2$ in $\R^n$, the Minkowski difference between them is denoted by $S_1 \ominus S_2 = \{ x \mid x+y \in S_1,~ \forall y\in S_2 \} $. The Minkowski sum between them is denoted by $S_1 \oplus S_2 = \{x+y \mid x \in S_1, y\in S_2\} $. The Minkowski sum of a collection of sets $\{S_{i}\}_{i\in I} $ is denoted by $ \sum_{i\in I} S_{i} $. Note that we apply the convention that if $ b < a$, the Minkowski sum $\sum^{b}_{i=a} S_{i} = \emptyset $. {\color{black}Moreover, with a slight abuse of notation, we let $x+S_1$ denote the Minkowski sum of a singleton set $\{x\}$ and a set $S_1$}.  {\color{black}For a collection of sets $\{S_{i}\}_{i=1}^{n} $, the cartesian product of $S_{i}$ for $i$ from $1$ to $n$ is denoted by $S_1\times S_2 \times ... \times S_{n}$. For the case $S_1=S_2=...=S_{n}= S$, $S_1\times ...\times S_{n}$ is denoted by $S^{n}$ for short. }   For a polytope $D$ in $\R^{n}$ and a linear mapping $L: \R^{n} \rightarrow \R^{m}$, the linear transformation of $D$ with respect to $L$ is denoted by $LD \doteq \{Lx\mid x\in D \} $. 

\section{Preliminaries and Problem Statement}
\label{sec:prelim} 
In this section, we introduce some basic concepts on controlled invariant sets and then provide the problem statement.

\subsection{Maximal Controlled Invariant Set}
Consider a linear system {\color{black} $ \Sigma $ defined by}
\begin{align}
 x(t+1) = f(x(t),u(t),d(t)) \doteq Ax(t)+ B u(t) + F d(t) ,\label{eqn:sys} 
\end{align}
where $x(t)\in Q \subseteq \R^n$ is the system state, $u(t)\in U \subseteq \R^m$ is the control input and $d(t) \in D$ is a non-measurable disturbance at time $t$, and $A$, $B$, $F$ are constant matrices with appropriate dimensions. The sets $Q, U$ are the state space and input space, and $D$ is the disturbance set, respectively.  The disturbance $d$ is called \emph{non-measurable} if the input $u(t)$ is determined before the disturbance $d(t)$ is determined at each time step $t \geq 0$.

Assume that the safety constraints on $\Sigma$ is captured by a subset $X$ of $Q$, in the sense that we want the system trajectory to stay within $X$ indefinitely. We call $X$ the \emph{safe set} of $\Sigma$.  A subset $C$ of $X$ is called \emph{controlled invariant} within $X$ if for all $x\in C$, there exists $u\in U$ such that for all $d\in D$, $x^{+} = f(x,u,d)\in C$. Since by definition the union of controlled invariant sets in $X$ is a controlled invariant set, there exists a unique \emph{maximal controlled invariant set} $C_{max}$ in X that contains any controlled invariant set in $X$.

Our interest is to compute the maximal controlled invariant set $C_{max}$, because once $C_{max}$ (or any controlled invariant set) is obtained, a static feedback law can be extracted from $C_{max}$ to ensure that the trajectory of the closed-loop system stays within $C_{max} \subseteq X$ indefinitely \cite{bertsekas1972infinite}. Next we describe the fixed point algorithm for computing $C_{max}$.

Given system $ \Sigma $ as in \eqref{eqn:sys},  for a subset $V$ of $Q$, the controlled predecessor operator $Pre^{\Sigma}(V)$ is defined as 
\begin{align}
Pre^{\Sigma}(V) = \{ x \mid \exists u\in U, \forall d\in D, f(x,u,d)\in V\},
\end{align}
that is the set of states which can reach $V$ in one step robust to any disturbance $d \in D$. A set $C$ is a controlled invariant set in $X$ if and only if 
\begin{align}
	C \subseteq Pre^{\Sigma}(C)\cap X. \label{eqn:inv_cond} 
\end{align}

Define $V_0= X $ and recursively define 
\begin{align}
V_{k+1} = Pre^{\Sigma}(V_{k}) \cap X. \label{eqn:fixed_point}  
\end{align}
According to the definition, $V_{k}$ is the maximal set of states that if the system starts from $V_{k}$, there exists a feedback control law that can control the system to stay within $X$ for at least $k$ steps. Therefore, $V_{k} \supseteq C_{max}$ is an outer approximation of $ C_{max} $. 

By the monotonicity of $V_{k}$, the limit $V_{\infty} = \lim_{k \rightarrow \infty} V_{k} = \cap_{i=1}^{\infty}V_{k}$ is well defined. If $ V_k $ converges in finitely many steps, namely that there exists a $k$ such that $V_{k} = V_{k+1}$, then $ C_{max} = V_{k} $, and $C_{max}$ is called \emph{finitely determined} \cite{kerrigan2000invariant}. Otherwise, under mild conditions given in \cite{bertsekas1972infinite}, $ V_k $ converges to $ C_{max} $ in the limit, while there also exists slight modifications of this iteration with finite termination guarantees that can approximate $ C_{max} $  to arbitrary precision \cite{rungger2017computing}. Hence, in the rest of this paper, we assume that $C_{max} = V_{\infty}$ holds though we believe approximation results also carry through our analysis. 

The following proposition reveals a key property of the maximal controlled invariant {\color{black} set}, which is used to prove our main results. 
\begin{proposition}
	Suppose that $C_{max} = V_{\infty}$. Then for all $x\not\in C_{max} $, there exists $N < \infty$ such that if the system starts at $x$, the system state can be forced to reach the unsafe set $Q\backslash X$ by disturbances in at most $N$ steps. \label{prop:finite_N}
\end{proposition}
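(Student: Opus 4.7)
My plan is to combine the monotonicity of the sequence $\{V_k\}$ with an inductive game-theoretic argument that interprets $V_k$ as the set of states from which the controller can keep the trajectory safe for $k$ steps.

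First, I would observe that $V_{k+1} \subseteq V_k$ for all $k$, which follows from $V_1 \subseteq V_0 = X$ together with the monotonicity of $Pre^{\Sigma}$. Because $\{V_k\}$ is a decreasing sequence and $C_{max} = V_{\infty} = \bigcap_{k\ge 0} V_k$, the hypothesis $x \notin C_{max}$ immediately yields the existence of some finite $N$ with $x \notin V_N$. So the entire burden of the proof is to show the following claim: \emph{if $x \notin V_k$, then the disturbance player has a strategy that, against any control input sequence, steers the system from $x$ into $Q\setminus X$ in at most $k$ steps}. The desired conclusion then follows by applying this claim with the $N$ found above.

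I would prove the claim by induction on $k$. The base case $k=0$ is trivial since $x \notin V_0 = X$ means $x$ is already unsafe. For the inductive step, assume the claim for $k$ and suppose $x \notin V_{k+1} = Pre^{\Sigma}(V_k) \cap X$. If $x \notin X$ we are done in zero steps; otherwise $x \in X$ but $x \notin Pre^{\Sigma}(V_k)$, which unfolds to: for every $u \in U$, there exists $d \in D$ with $f(x,u,d) \notin V_k$. The disturbance strategy at the first step is then to observe the controller's choice $u$ (which is legitimate because $u(t)$ is committed before $d(t)$ in the problem setup) and pick such a $d$; this lands the system in a state $x^+ \notin V_k$, at which point the inductive hypothesis supplies a disturbance strategy forcing exit from $X$ within $k$ more steps, giving $k+1$ steps in total.

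The only delicate point will be articulating the disturbance strategy cleanly, since it is not a fixed disturbance sequence but rather a feedback strategy that reacts to the controller's inputs at every step. I would stress that this matches the quantifier order $\exists u\, \forall d$ inside $Pre^{\Sigma}$: the \emph{negation} of being in $V_{k+1}\setminus\{x\notin X\}$ is precisely $\forall u\, \exists d$, which is exactly what allows us to define the adversarial $d$ as a measurable function of $u$ at each step. With that quantifier bookkeeping made explicit, the induction goes through and $N$ from the first paragraph becomes the promised finite horizon.
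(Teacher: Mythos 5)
Your proof is correct and follows essentially the same route as the paper: both arguments reduce to the observation that $x \notin V_{\infty} = \bigcap_k V_k$ forces $x \notin V_N$ for some finite $N$, combined with the characterization of $V_N$ (equivalently, its complement $W_N$) as the set where the controller can (respectively, cannot) survive for $N$ steps. The only difference is that the paper simply asserts this characterization by appealing to the stated interpretation of $V_k$, whereas you prove it by an explicit induction on $k$ with the quantifier order $\forall u\,\exists d$ made precise --- a worthwhile addition, not a deviation.
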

\begin{proof}
	Define $ W_k = Q \backslash V_k$ for $ k \geq 0 $ and $W_{\infty}= \cup_{i=1}^{\infty} W_{k}$. Then $W_{0}= Q\backslash X$ is the unsafe set, and as the complement of $ V_k $, $W_{k}$ is the maximal set of states that if the system starts from $W_{k}$, there exists disturbances that can force the system state to reach the unsafe set $ W_0 $ in at most $k$ steps. By De Morgan's law, $ W_{\infty} = X\backslash V_{\infty} $.
	
   Now pick $x \not\in C_{max} = V_{\infty}$. Then $x\in W_{\infty} = \cup_{i=1}^{\infty} W_{k}$, which implies that there exists a $N < \infty$ such that $x\in W_{N}$. As discussed above, if the system starts from state $ x\in W_{N}$, disturbance can force the system state to reach $W_0$ in at most $N$ steps.
\end{proof}

\subsection{Problem Statement}

	Consider a linear system with $\tau$-step pure delay in control input, that is 
\begin{align}
	\Sigma_{delay}: x(t+1) = Ax(t) + Bu(t-\tau) + F d(t) \label{eqn:sys_del} 
\end{align}
with $x(t)\in Q$, $u(t)\in U$ and non-measurable $d(t)\in D$, where the sets $Q$,  $U$ and $D$ are polytopes. 
System $\Sigma_{delay}$ can be written in the form of \eqref{eqn:sys} by appending the past $\tau$-step inputs to the state space. The resulting $(n+m\tau)$-dimensional augmented system takes the form:
\begin{align}
	\Sigma_{aug}:	\left\{ \begin{array} {rcl} 
			x(t+1)& = & Ax(t) + B u_{1}(t) + F d(t)\\
			u_{1}(t+1)& = &u_{2}(t)  \\
			u_{2}(t+1)& = & u_{3}(t)  \\
			 &\vdots &\\
	u_{\tau}(t+1)& =& u(t),\end{array}\right. \label{eqn:sys_aug} 
\end{align}
where $x(t)\in Q$, $u(t)\in U$ and non-measurable disturbance $d(t)\in D$ for all $t\geq 0$. If we want the state trajectories of the system $\Sigma_{delay}$ to remain in a polytopic safe set $X \subseteq Q$, this is the same as asking the trajectories of the augmented system $\Sigma_{aug}$ to remain in the safe set $S = X\times U^{\tau}$. Therefore, one can state the invariance problem for a system with input delay in terms of the system $\Sigma_{aug}$ as follows.

\begin{problem}
Find the maximal controlled invariant set of the system in  $\Sigma_{aug}$  within the safe set $S$. \label{prob:1}
\end{problem}

By the discussion of the preceding section, Problem \ref{prob:1} can, in principle, be solved by the iterative algorithm in \eqref{eqn:fixed_point}. However, it is well-known that this standard iterative algorithm suffers from the curse of dimensionality. Since the dimension of the augmented system increases linearly with the delay steps $\tau$, the computation of $V_{\infty}$ becomes intractable very soon as $\tau$ increases. In what follows, we propose a method to solve Problem \ref{prob:1} whose complexity is independent of $\tau$.

\section{State Prediction and Prediction Dynamics} \label{sec:method} 

Our solution approach relies on the construction of a reduced-order delay-free auxiliary dynamics with state space dimension {\color{black} the same as $x(t)$ in}  \eqref{eqn:sys_del}. We then show that the maximal controlled invariant set of the system in $\Sigma_{aug}$ can be reconstructed from the maximal controlled invariant set of the delay-free dynamics within a modified safe set. Since the dimension of new dynamics does not explode as the delay step  $\tau$ increases, the proposed method is computationally more efficient than directly computing the maximal controlled invariant set of \eqref{eqn:sys_aug}.

Our method is inspired by the following observation. Suppose that for the case $\tau=0$, the maximal controlled invariant set of $\Sigma_{delay}$ in safe set $X$ is $C$. Also, suppose that there is a sensor that can measure the future $\tau$-step values of the disturbance $d$. Then at each time $t$, the exact state evolution in $\tau$ steps, namely $x(t+1)$, ..., $x(t+\tau)$, can be calculated based on the measurement of $x(t)$, future disturbance $d(t)$, ..., $d(t+\tau-1)$ and the extended states $u_{1}(t)$, ..., $u_{\tau}(t)$, which essentially correspond to past inputs.  Then for the dynamics \eqref{eqn:sys_aug}, as long as $x(t+\tau)\in C$, we can pick $u(t)$ such that $x(t+\tau+1)\in C$. It is not hard to show that 

\begin{align}
\exists u(t) \text{ s.t. } x(t+\tau)\in X,\forall t\geq 0 \Longleftrightarrow x(\tau) \in C,\label{eqn:perfect_est} 
\end{align} 
which is very close to our goal except that $x(0),x(1),...,x(\tau-1)$ can be out of $X$. 

Moreover, by definition, $d(t), ..., d(t+\tau-1)$ are not accessible at time $t$. Nevertheless, we can predict the state evolution by assuming the future disturbances to be zero. The question is if a result similar to \eqref{eqn:perfect_est} exists. The answer is yes. We are going to show this result in the rest of this section.

First, we expand $\Sigma_{aug}$ in $\tau$ steps to obtain the exact expression of $x(t+\tau)$ as
\begin{align}
x(t+\tau)
=& A^{\tau}x(t) + \sum^{\tau}_{i=1} A^{i-1} B u_{\tau-i+1}(t) + \nonumber\\
&\sum^{\tau}_{i=1} A^{i-1} F d(t+\tau - i). \label{eqn:expand} 
\end{align}
Since $x(t)$, $u_{1}(t)$, ..., $u_{\tau}(t)$ are known at time $t$, we define  
\begin{align}
\widehat{x}_{\tau}(t) = A^{\tau}x(t) + \sum^{\tau}_{i=1} A^{i-1} B u_{\tau-i+1}(t), \label{eqn:x_wh} 
\end{align}
as a prediction of $x(t+\tau)$ based on the state measurements of dynamics $\Sigma_{aug}$ at time $t$. Define the polytope $D_{\tau}= \sum^{\tau}_{i=1} A^{i-1} F D$, which is the exact bound of the prediction error $(x(t+\tau)-\widehat{x}_{\tau}(t))$. Note that $D_{\tau} $ is time invariant and can be computed offline. Then, for all $t\geq 0$, we have the following inclusion relation:
\begin{align}
	x(t+\tau)\in \widehat{x}_{\tau}(t) + D_{\tau} \doteq \{\widehat{x}_{\tau}(t)+d \mid d\in D_{\tau}\} \label{eqn:approx_x},
\end{align}
which implies the following statement: 
\begin{align}
\widehat{x}_{\tau}(t) + D_{\tau} \subseteq X,\forall t \geq 0\Rightarrow  x(t+\tau)\in X, \forall t \geq 0. \label{eqn:equi} 
\end{align}

Therefore, if there exists a controller $u(t)$ such that $\widehat{x}_{\tau}(t)\in X\ominus  D_{\tau}$ for all $t\geq 0$, such a controller guarantees that  $x(t+\tau)\in X$ for all  $t\geq 0$. 

According to the analysis above, it is important to understand the relation between $\widehat{x}_{\tau}(t)$ and $\widehat{x}_{\tau}(t+1)$.  By definition and after some simple algebra, we have $ \Sigma_{aux} $ defined by
\begin{align}
\widehat{x}_{\tau}(t+1) & = A^{ \tau} x(t+1) + \sum^{\tau}_{i=1} A^{i-1} B u_{\tau - i+1}(t+1) \\
& = A \widehat{x}_{\tau}(t) + B u(t) +A^{\tau}F d(t),\label{eqn:sys_2}
\end{align}
where $d(t)\in D$ is a non-measurable disturbance as before. 

Thus, the problem becomes: given $\widehat{x}_{\tau}(t)\in X \ominus  D_{\tau}$, find a $u(t)\in  U$ so that for all $d(t)\in D$, $\widehat{x}_{\tau}(t+1)\in X \ominus D_{\tau}$. All that we need is to compute the maximal controlled invariant set of  $\Sigma_{aux}$  within $X \ominus D_{\tau}$, denoted as $\widehat{C}$. This computation can be done using the standard iterative algorithm in \eqref{eqn:fixed_point}. Since the dimension of $\Sigma_{aux}$  {\color{black} is equal to the dimension of $ x $} in \eqref{eqn:sys_del}, the complexity is not directly affected by the delay time $\tau$. 

Once $\widehat{C}$ is obtained, to guarantee $x(t+\tau)\in X$ for all $t\geq 0$, we need the initial state $x(0)$, $u_{1}(0)$, ..., $u_{\tau}(0)$ of dynamics $\Sigma_{aug}$ 
to be in the set

\begin{align}
C_{\tau} = \{(x(0),u_{1}(0),...,u_{\tau}(0))\mid \widehat{x}_{\tau}(0)\in \widehat{C}\}, \label{eqn:cons_1} 
\end{align}
where $\widehat{x}_{\tau}(0)$ is a function of $x(0)$, $u_{1}(0)$, ..., $u_{\tau}(0)$ defined in \eqref{eqn:x_wh}.

Furthermore, we want $x(0)$, $x(1)$, \ldots,  $x(\tau-1)$ to be within $X$. Note that $x(0)$, \ldots, $x(\tau-1)$ are determined by $x(0)$, $u_{1}(0)$, \ldots, $u_{\tau-1}(0)$, that is
\begin{align}
x(k) = A^{k}x(0) + \sum^{k}_{i=1} A^{i-1} B u_{k-i+1}(0) + \sum^{k}_{i=1} A^{i-1}Fd(k-i).
\end{align}

Therefore, for $k=0,1,...,\tau-1$, the condition under which  $x(k)$ is in $X$ for arbitrary $d(0)$, \ldots, $d(k-1)$ is
\begin{align}
	C_{k} =& \bigg\{ (x(0), u_{1}(0), \ldots, u_{\tau}(0))\bigg\vert \big( A^{k}x(0) + \sum^{k}_{i=1} A^{i-1}B u_{k-i+1}(0)  \big) \nonumber\\ 
		   &\in X \ominus  \sum^{k}_{i=1} A^{i-1}FD  \bigg\}.  \label{eqn:cons_2} 
\end{align}

Now we denote the set of states in $S = X\times U^{\tau}$ satisfying the constraints listed in \eqref{eqn:cons_1} and \eqref{eqn:cons_2} by 
\begin{align}
C_{ext} =  \left( \bigcap_{i=0}^{\tau}C_{i} \right)\cap S. \label{eqn:win_set} 
\end{align}
{\color{black}Note that each $C_k$ for $k=0,1,\ldots,\tau-1$ is just a polytope with explicit definition. Similarly, the polytope $C_{\tau}$ is also explicitly defined in \eqref{eqn:cons_1}, once the set 
$\widehat{C}$ is computed, a computation that takes place in the $n$-dimensional state space of the system $\Sigma_{aux}$. Therefore, the only polytopic operation in $(n+m\tau)$-dimensional space in computing $C_{ext}$ is the polytope intersection in \eqref{eqn:win_set}. Now, we state our main result.}

\begin{theorem}
	$C_{ext}$ is the maximal controlled invariant set of dynamics $\Sigma_{aug}$ in the set $S$. \label{thm:1}  
\end{theorem}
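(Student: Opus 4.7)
The plan is the standard two-part characterization of the maximal controlled invariant set: first show that $C_{ext}$ is itself controlled invariant in $S$, then show that any other controlled invariant set in $S$ is contained in $C_{ext}$.

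For the invariance direction, I would take an arbitrary state $\xi = (x(0), u_1(0), \ldots, u_\tau(0)) \in C_{ext}$ and exhibit a control $u(0) \in U$ such that for every disturbance $d(0) \in D$ the successor state $\xi^+ = (x(1), u_2(0), \ldots, u_\tau(0), u(0))$ lies in $C_{ext}$ again. Since $\xi \in C_\tau$ means $\widehat{x}_\tau(0) \in \widehat{C}$, and $\widehat{C}$ is controlled invariant for $\Sigma_{aux}$ within $X \ominus D_\tau$, I can pick $u(0)$ so that $\widehat{x}_\tau(1) \in \widehat{C}$ for all $d(0) \in D$; noting that $\widehat{x}_\tau(1)$ computed from $\xi^+$ equals the one-step update of $\widehat{x}_\tau(0)$ via \eqref{eqn:sys_2}, this gives $\xi^+ \in C_\tau$. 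The memberships $\xi^+ \in C_k$ for $k = 0, 1, \ldots, \tau-1$ follow by an index-shift argument from $\xi \in C_{k+1}$, using that the expression defining $C_k$ evaluated at $\xi^+$ is (up to a $D$-term that is absorbed by the Minkowski difference) exactly the one defining $C_{k+1}$ at $\xi$. Finally, $\xi^+ \in S$ follows from $\xi \in C_1$ (which forces $x(1) \in X$) and $u(0) \in U$.

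For the maximality direction, let $C^\ast$ be any controlled invariant set in $S$ and let $\xi \in C^\ast$. Membership in $S$ is immediate. For $k = 0, 1, \ldots, \tau-1$, I would use the explicit formula $x(k) = A^k x(0) + \sum_{i=1}^k A^{i-1} B u_{k-i+1}(0) + \sum_{i=1}^k A^{i-1} F d(k-i)$; because the first $\tau$ inputs are frozen by the delayed state, the term $A^k x(0) + \sum_{i=1}^k A^{i-1} B u_{k-i+1}(0)$ is fully determined by $\xi$, and $x(k) \in X$ must hold for every admissible disturbance sequence. This is precisely $\xi \in C_k$.

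The main obstacle is the $C_\tau$ case, since $\widehat{C}$ is defined by a fixed-point iteration and its membership is not a one-step condition. I would argue by contradiction using Proposition \ref{prop:finite_N} applied to $\Sigma_{aux}$ with safe set $X \ominus D_\tau$: if $\widehat{x}_\tau(0) \notin \widehat{C}$, then there exist a finite $N$ and, regardless of the controls $u(0), \ldots, u(N-1)$ chosen by any feedback law, a disturbance realization $d(0), \ldots, d(N-1)$ driving $\widehat{x}_\tau(N) \notin X \ominus D_\tau$. By the definition of Minkowski difference, some $\delta \in D_\tau$ satisfies $\widehat{x}_\tau(N) + \delta \notin X$; writing $\delta = \sum_{i=1}^\tau A^{i-1} F d_i$ with $d_i \in D$ and setting $d(N+\tau-i) = d_i$, the expansion \eqref{eqn:expand} yields $x(N+\tau) = \widehat{x}_\tau(N) + \delta \notin X$, contradicting $\xi \in C^\ast$. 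The subtle point here is the information pattern: the bad disturbance sequence must defeat every feedback controller, which is exactly what Proposition \ref{prop:finite_N} provides because the complement of the fixed point is built from successor operators that quantify over all $u \in U$.
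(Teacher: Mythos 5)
Your proposal is correct and follows essentially the same route as the paper's proof: invariance of $C_{ext}$ via the controlled invariance of $\widehat{C}$ for $\Sigma_{aux}$ plus the index-shift $C_{k+1}\to C_k$ (with the $D$-term absorbed by the Minkowski difference), and maximality via the explicit disturbance-robust conditions for $C_0,\ldots,C_{\tau-1}$ together with a Proposition~\ref{prop:finite_N} contradiction argument for $C_\tau$, including the same lifting of the bad $\Sigma_{aux}$ disturbance strategy to a trajectory of $\Sigma_{aug}$ that exits $X$ at time $s+\tau$. The only (harmless) difference is that you show every controlled invariant set in $S$ is contained in $C_{ext}$, whereas the paper picks a point of $C_{aug}\backslash C_{ext}$ and derives a contradiction.
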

\begin{proof}
	Denote the maximal controlled invariant set of  $\Sigma_{aug}$ contained by S as $C_{aug}$.
	
	First, we want to show $C_{ext} \subseteq C_{aug}$. It is enough to show that $C_{ext}$ is a controlled invariant set in $S$. 
	
	Let $(x(t), u_{1}(t), u_{2}(t), ..., u_{\tau}(t))\in C_{ext}$. We want to find a $u(t)\in U$ such that for all $d(t)\in D$, $(x(t+1), u_{2}(t),...,u_{\tau}(t),u(t))\in C_{ext}$. 
	
	Since $(x(t), u_{1}(t), u_{2}(t), ..., u_{\tau}(t))\in C_{\tau}\cap S$, we have $\widehat{x}_{\tau}(t) \in \widehat{C}$. Hence, there exists $u(t)\in U$ such that for all  $d(t)\in D$, $\widehat{x}_{\tau}(t+1)\in \widehat{C}$. That is, 
	\begin{align}
	(x(t+1), u_{2}(t), ..., u_{\tau}(t), u(t)) \in C_{\tau}. \label{eqn:proof_c1} 
	\end{align}
	
	Also, since $(x(t), u_{1}(t), u_{2}(t), ..., u_{\tau}(t))\in \bigcap_{i=0}^{\tau-1}C_i$, $\{ x(t+1), ..., x(t+\tau-1)\}$ is contained by $X$. Because $x(t+1)\in X$ and $u(t)\in U$, 
	\begin{align}
	(x(t+1), u_{2}(t), ...,u_{\tau}(t), u(t))\in S. \label{eqn:proof_c2}  
	\end{align}
	
	{\color{black}Since $\widehat{x}_{\tau}(t)\in \widehat{C} \subseteq X\ominus D_\tau$, we have $x(t+\tau)\in X$ by \eqref{eqn:equi}.} Hence, for state $(x(t+1),u_{2}(t),...,u_{\tau}(t), u(t))$, for arbitrary $d(t+1),...,d(t+\tau-1)\in D$, it is verified that $\{ x(t+1), ..., x(t+\tau)\}$ is contained by $X${\color{black}, which implies}%. Hence, 
	\begin{align}
	(x(t+1),u_{2}(t),...,u_{\tau}(t),u(t))\in \bigcap_{i=0}^{\tau-1}C_{i}. \label{eqn:proof_c3} 
	\end{align}
	
	By \eqref{eqn:proof_c1}, \eqref{eqn:proof_c2} and \eqref{eqn:proof_c3}, there exists $u(t)$ such that for arbitrary $d(t)$, $$(x(t+1), u_{2}(t),...,u_{\tau}(t), u(t))\in \left( \bigcap_{i=0}^{\tau}C_{i} \right) \cap S = C_{ext}.$$ 
	Therefore, $C_{ext} $ is a controlled invariant set of dynamics $\Sigma_{aug}$  in $S$. Since $C_{aug}$ is the maximal controlled invariant set in $S$, $C_{ext} \subseteq C_{aug}$.
	
	\begin{figure}
		\centering
		\includegraphics[width=0.5\textwidth,trim=0 0 -40 -10]{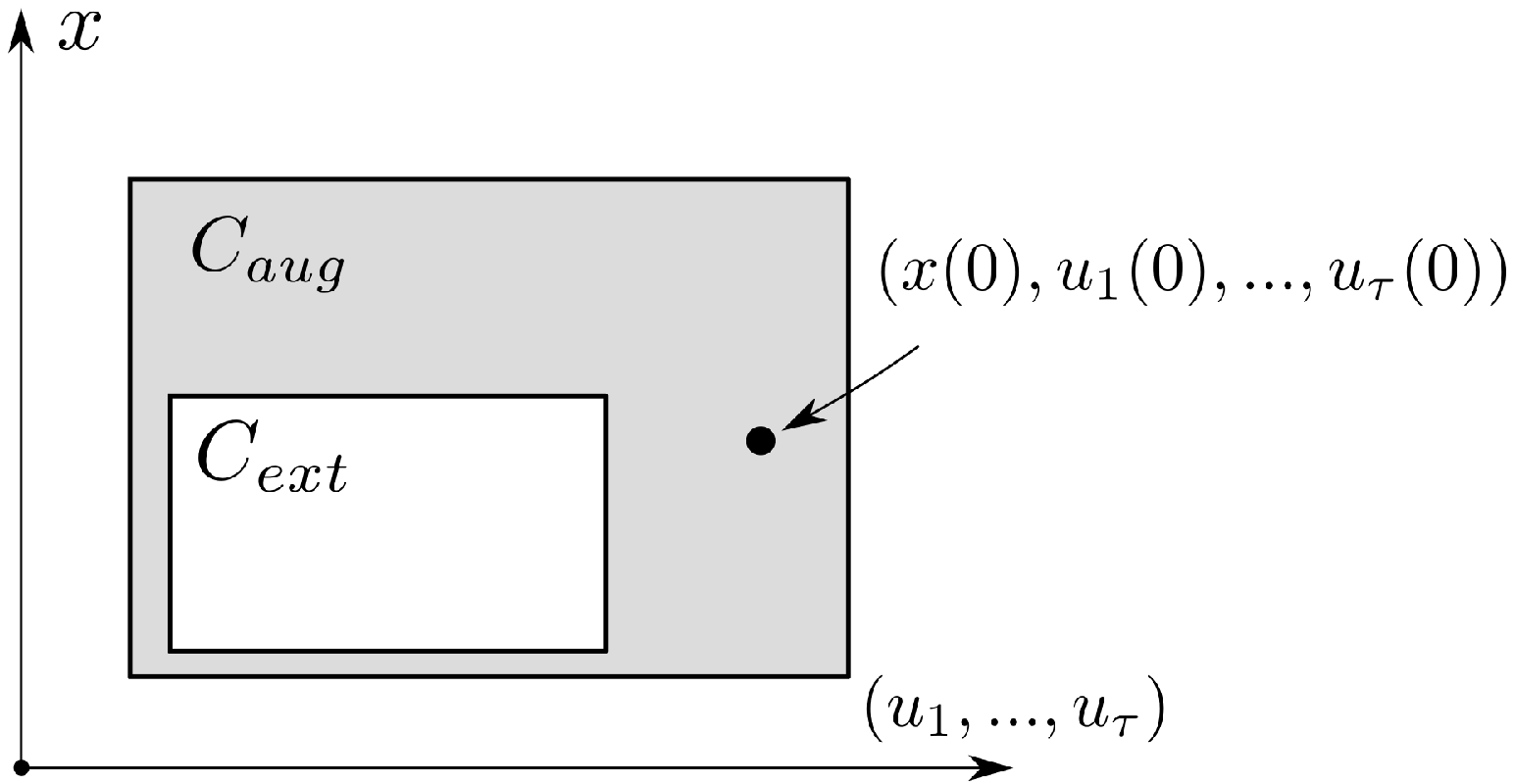}	
		\caption{Pick a point from $C_{aug}\backslash C_{ext}$.}
		\label{fig:cext_caug}
	\end{figure}
	
	Next, we want to show $C_{ext} = C_{aug}$. Suppose that there exists $$(x(0),u_{1}(0), ..., u_{\tau}(0)) \in C_{aug}\backslash C_{ext},$$ {\color{black} as demonstrated in Figure  \ref{fig:cext_caug}. }  It is easy to show that $C_{aug} \subseteq C_{k}\cap S$ for $k$ from $0$ to $\tau-1$. Since $C_{ext} = \left( \bigcap_{i=0}^{\tau-1}(C_{i}\cap S)  \right)\cap C_{\tau} \subseteq C_{aug}$, we have $(x(0),u_{1}(0), ..., u_{\tau}(0))\not\in C_{\tau}$ and thus $\widehat{x}_{\tau}(0)\not\in \widehat{C}$. Since $\widehat{C}$ is the maximal controlled invariant set of $\Sigma_{aux}$  in $X\ominus D_{\tau}$, by Proposition \ref{prop:finite_N}, there exists $N \geq 0$ such that $\forall u(0)\in U$, $\exists d(0)\in D$, $ \forall u(1)\in U$, $\exists d(1)\in D$, ..., $\forall u(N-1)\in U$, $\exists d(N-1)\in D$,  the trajectory $\{\widehat{x}_{\tau}(k)\}_{k=0}^N \not\subseteq X \ominus D_{\tau}$. That is, there exists some $ s\leq N $ such that $ \widehat{x}_\tau(s)\not\in X \ominus D_{\tau} $. Denote  $(x(s),u_{1}(s),...,u_{\tau}(s))$ as the state of  $\Sigma_{aug}$ corresponding to  $\widehat{x}_{\tau}(s)$. Followed by $\widehat{x}_{\tau}(s)\not\in X\ominus D_{\tau}$, for any possible state $(x(s),u_{1}(s),...,u_{\tau}(s))$ of  $\Sigma_{aug}$ at time $s$, there exists $d(s), d(s+1),..., d(s+\tau-1)\in D$ such that $x(s+\tau)\not\in X$, which holds for arbitrary control inputs $u(s),...,u(s+\tau-1)$. Contradiction to the assumption that $C_{aug}$ is a controlled invariant set of dynamics $\Sigma_{aug}$. Therefore, $C_{ext}=C_{aug}$.
\end{proof}

{\color{black} 

The following corollary directly follows from the first part of the proof of Theorem \ref{thm:1}.
\begin{corollary}
	If $\widehat{C}$ is a controlled invariant set of $\Sigma_{aux}$ in $X \ominus  D_{\tau}$ (but not necessarily the maximal one), then $C_{ext}$ is a controlled invariant set of $ \Sigma_{aug}$ in $S$.
\end{corollary}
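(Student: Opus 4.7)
The plan is to observe that the first part of the proof of Theorem \ref{thm:1}, namely the argument establishing $C_{ext}\subseteq C_{aug}$, never actually uses the hypothesis that $\widehat{C}$ is the \emph{maximal} controlled invariant set of $\Sigma_{aux}$ in $X\ominus D_\tau$. It only invokes the defining one-step property of controlled invariance: for every $\widehat{x}_\tau\in\widehat{C}$, there exists $u\in U$ such that for all $d\in D$, the successor $A\widehat{x}_\tau+Bu+A^\tau F d$ lies again in $\widehat{C}$. Hence the corollary should follow by replaying that exact argument with the weaker hypothesis.

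Concretely, I would start from an arbitrary $(x(t),u_1(t),\dots,u_\tau(t))\in C_{ext}$ and show that there exists $u(t)\in U$ such that for every $d(t)\in D$, the successor $(x(t+1),u_2(t),\dots,u_\tau(t),u(t))$ is in $C_{ext}$. Since $(x(t),u_1(t),\dots,u_\tau(t))\in C_\tau$, the predicted state $\widehat{x}_\tau(t)$ defined in \eqref{eqn:x_wh} lies in $\widehat{C}$. Controlled invariance of $\widehat{C}$ in $\Sigma_{aux}$ within $X\ominus D_\tau$ now supplies a $u(t)\in U$ such that $\widehat{x}_\tau(t+1)\in\widehat{C}\subseteq X\ominus D_\tau$ for every $d(t)\in D$, which gives membership in $C_\tau$ of the successor, exactly as in \eqref{eqn:proof_c1}.

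The membership of the successor in $\bigcap_{i=0}^{\tau-1}C_i$ and in $S$ follows verbatim from the corresponding steps \eqref{eqn:proof_c2}--\eqref{eqn:proof_c3} in the proof of Theorem \ref{thm:1}: the fact that the current state is in $\bigcap_{i=0}^{\tau-1}C_i$ yields $x(t+1),\dots,x(t+\tau-1)\in X$, and the containment $\widehat{C}\subseteq X\ominus D_\tau$ combined with \eqref{eqn:equi} yields $x(t+\tau)\in X$; shifting the index by one and using $u(t)\in U$ together with $x(t+1)\in X$ place the successor in $S$ and in each $C_i$ for $i=0,\dots,\tau-1$.

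The hard part, if any, is just conceptual: recognizing that the second half of the proof of Theorem \ref{thm:1} (the argument $C_{aug}\subseteq C_{ext}$ via Proposition \ref{prop:finite_N}) is the only place where maximality of $\widehat{C}$ is needed, and that the invariance direction is independent of that assumption. So the corollary reduces to copying the first half of that proof with $\widehat{C}$ in the role of an arbitrary controlled invariant subset of $X\ominus D_\tau$, and no additional machinery is required.
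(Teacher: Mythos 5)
Your proposal is correct and matches the paper exactly: the paper itself states that the corollary ``directly follows from the first part of the proof of Theorem \ref{thm:1},'' and your write-up simply spells out why that first part (establishing that $C_{ext}$ is controlled invariant via \eqref{eqn:proof_c1}--\eqref{eqn:proof_c3}) uses only the one-step invariance property of $\widehat{C}$ and never its maximality. Your identification of the second half of that proof (the contradiction argument via Proposition \ref{prop:finite_N}) as the only place maximality enters is also accurate.
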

The above corollary shows that our method can be applied even if $\widehat{C}$ is not the maximal controlled invariant set of $\Sigma_{aux}$, which is very useful because typically it is easier to find a controlled invariant set than the maximal controlled invariant set.
} 

\section{Extension to Systems with Preview} \label{sec:extension} 
In the previous section, the disturbance is assumed to be completely non-measurable, but in real-world systems, many external signals can be previewed in ahead of time by sensors (see, for instance, the lane keeping example in Section~\ref{sec:LK_ex} where the disturbance term is the lane curvature). In this section, we study the maximal controlled invariant set for time-delayed systems whose disturbance inputs can be separated into a non-measurable disturbance and a disturbance with preview. 

More specifically, a disturbance is called \emph{with $k$-step preview} if $d(t)$, $d(t+1)$, ..., $d(t+k-1)$ can be measured by the controller at time $t$. Hence, the control input $u(t)$ can be picked depending on $d(t), d(t+1), ..., d(t+k-1)$.  By definition, a non-measurable disturbance is with $0$-step preview.

Consider the following linear system
\begin{align}
	\Sigma_{delay}^{prev}: x(t+1) = Ax(t) + Bu(t-\tau) + F_{0} d_{0}(t)+  F_{p} d_{p}(t), \label{eqn:sys_del_pvw} 
\end{align}
where $x(t)\in Q \subseteq \R^n$, $u(t)\in U\subseteq \R^m$, $d_{0}$ is a non-measurable disturbance in $D_{0}$ and $d_{p}$ is a disturbance with $p$-step preview in $D_{p}\subseteq \R^l$ ($1\leq  p \leq \tau$)\footnote{Note that the results in this section can be generalized for systems that have disturbances with different preview lengths (less than or equal to $\tau$). 
For the sake of simplicity, we restrict the discussion to the case where the system has disturbance only with a fixed preview length.}. The state space $Q$, input space $U$, disturbance bounds $D_0$ and $D_{p}$ are polytopes.  In the analysis that follows, we assume  $p \leq \tau$ and show that in this case one can still compute the invariant set by applying the invariance iterations in $n$-dimensional space. While the case when $p > \tau$ can be handled with some added complexity within the current framework, whether this case can also be reduced to iterations in an $n$-dimensional space is left for future research.   

Similar to the delay, a system with preview can also be converted to a standard linear system by appending the state space with addition states corresponding to the preview of the disturbance. In particular, we have the following augmented system equivalent to $\Sigma_{delay}^{prev}$:

\begin{align}
\Sigma_{aug}^{prev}:\left\{
		\begin{array}{rcl}
		x(t+1) &=& Ax(t) + B u_{1}(t) + F_{0} d_{0}(t)+ F_{p}d_{p,1}(t)\\
		u_{1}(t+1) &=& u_{2}(t)  \\
		&\vdots&\\
		u_{\tau}(t+1) &=& u(t)\\
		d_{p,1}(t+1) &=& d_{p,2}(t)\\
		d_{p,2}(t+1) &=& d_{p,3}(t)\\
		&\vdots&\\
		d_{p,p}(t+1) &=& d_{p,f}(t),
	\end{array} \right. \label{eqn:sys_aug2} 
\end{align}
where $x(t)\in Q$, $u(t)\in U$, $d_{0}(t)$ and $d_{p,f}(t) $ are non-measurable disturbances bounded by $D_{0}$ and $D_{p}$. Note that $d_{p,f}(t)$ is just an alias of $d_{p}(t+p)$. The safe set of system $\Sigma_{aug}^{prev}$  is $S_{p}= X\times U^{\tau}\times D_{p}^{p}$.

\begin{problem}
    Find the maximal controlled invariant set of system $\Sigma_{aug}^{prev}$  within the safe set $S_{p}$.
\end{problem}

Proceeding as in the previous section, we can write the $\tau$-step expansion of $x(t+\tau)$ as
\begin{align}
	x(t+\tau)
	=& A^{\tau} x(t)+  \sum^{\tau}_{j=1} A^{j-1}Bu_{\tau-j+1}(t) + \nonumber\\
	 &\sum^{\tau}_{j=\tau-p+1}  A^{j-1}F_{p}d_{p,\tau-j+1}(t) +\nonumber\\
	 &\sum^{\tau}_{j=1}  A^{j-1}F_{0}d_{0}(t+\tau-j)+ \nonumber\\
	 &\sum^{\tau-p}_{j=1}  A^{j-1}F_{p}d_{p,f}(t+\tau-p-j).
\end{align}
Based on what can be measured at time $t$, we define the prediction variable 
\begin{align}
	\widehat{x}_{\tau}(t) =& A^{\tau} x(t)+  \sum^{\tau}_{i=1} A^{i-1}Bu_{\tau-i+1}(t) +\sum^{\tau}_{j=\tau-p+1}  A^{j-1}F_{p}d_{p,\tau-j+1}(t) \label{eqn:x_pred2} 
\end{align}
assuming the non-measurable disturbance is zero. 
The dynamics of $\widehat{x}_{\tau}$ takes the form
\begin{align}
	\Sigma_{aux}^{prev}: \widehat{x}_{\tau}(t+1) =& A \widehat{x}_{\tau}(t) + Bu(t) + A^{\tau}F_{0}d_{0}(t)+\nonumber\\
	&A^{\tau-p}F_{p}d_{p,f}(t), \label{eqn:sys_3} 
\end{align}
where $u(t)\in U$, $d_{0}\in D_{0}$ and $d_{p,f}(t)\in D_{p}$ are non-measurable disturbances, which is again an $n$-dimensional system. 
From Eq.~\eqref{eqn:x_pred2}, it follows that to ensure $x(t+\tau)\in X$, we need 
\begin{align}
	\widehat{x}_{\tau}(t) \in \widehat{X} := X \ominus \sum_{i\in\{0,p\} } \sum^{\tau-i}_{j=1}  A^{j-1}F_{i}D_{i}.
\end{align}

Let $\widehat{C}_{p}$ as the maximal controlled invariant set of  $\Sigma_{aux}^{prev} $ within $\widehat{X}$. We next show how to construct an invariant set for the $(n+m\tau+pl)$-dimensional augmented system $\Sigma_{aug}^{prev}$ using the set $\widehat{C}_{p}\subseteq \R^n$. The constraints on the initial states of the system $\Sigma_{aug}^{prev}$, ensuring the existence of a controller such that $x(t+\tau)\in X $ for all $t \geq 0$, can be written as

\begin{align}
	C_{p,\tau} = \{(x(0),u_{1}(0),...,d_{\tau,\tau}(t)) \mid \widehat{x}_{\tau}(0)\in \widehat{C}_{p}\},
\end{align}
where $\widehat{x}_{\tau}(0)$ is defined by \eqref{eqn:x_pred2}. 

To ensure $x(k)\in X$ for the time period $k = 0,\ldots,\tau-1$, we need the following constraints 
on initial states:
{\small
\begin{align} 
	C_{p,k}=& \bigg\{ (x(0),u_{1}(0),\ldots,d_{\tau,\tau}(t)) \mid A^{k}x(0)+ \sum^{k}_{j=1} A^{j-1}Bu_{k-j+1}(0) +\nonumber\\ 
			&\sum^{k}_{j=\max\{1,k-p+1\} } A^{j-1}F_{p}d_{p,k-j+1}(0) \nonumber\\
			&\in X \ominus \sum_{i\in\{0,p\} } \sum^{k-i}_{j=1} A^{j-1}F_{i}D_{i} \bigg\}
\end{align}
}

Finally, define the intersection of these constraint sets:
\begin{align}
	C_{p,ext} = (\bigcap_{k=0}^{\tau} C_{p,k})\cap S_{p}.
\end{align}

\begin{theorem}
$C_{p,ext}$ is the maximal controlled invariant set of system $\Sigma_{aug}^{prev}$  in set $S_{p}$. \label{thm:2} 
\end{theorem}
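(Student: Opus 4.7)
The plan is to follow the same two-part structure used for Theorem~\ref{thm:1}: first show $C_{p,ext}$ is a controlled invariant subset of $S_p$ (hence contained in the maximal one, call it $C_{aug}^{prev}$), then show equality by contradiction via Proposition~\ref{prop:finite_N}. The role that $x$, $\widehat{x}_\tau$, $C_k$, $C_\tau$, $\widehat{C}$ played in the delay-only case is taken here by the augmented state $(x(t),u_1(t),\ldots,u_\tau(t),d_{p,1}(t),\ldots,d_{p,p}(t))$, the previewed prediction~\eqref{eqn:x_pred2}, the sets $C_{p,k}$, $C_{p,\tau}$, and $\widehat{C}_p$. The only conceptual novelty is that the control law at time $t$ is now allowed to read the previewed disturbances, but this is already built into~$\widehat{x}_\tau(t)$ through the term $\sum_{j=\tau-p+1}^{\tau}A^{j-1}F_p d_{p,\tau-j+1}(t)$, so the delay-free iteration on $\Sigma_{aux}^{prev}$ automatically exploits this information.

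For the invariance direction, I would pick a state $(x(t),u_1(t),\ldots,u_\tau(t),d_{p,1}(t),\ldots,d_{p,p}(t)) \in C_{p,ext}$ and, using $C_{p,\tau}$-membership and the definition of $\widehat{C}_p$ as the maximal controlled invariant set of $\Sigma_{aux}^{prev}$ in $\widehat{X}$, extract a $u(t)\in U$ (depending on the previewed state, which is legitimate) such that for every non-measurable $d_0(t)\in D_0$ and every $d_{p,f}(t)\in D_p$, one has $\widehat{x}_\tau(t+1)\in\widehat{C}_p$. Matching this against the update rule for the previewed auxiliary variables in $\Sigma_{aug}^{prev}$ gives next-step membership in $C_{p,\tau}$. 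Next-step membership in $C_{p,k}$ for $k=0,\ldots,\tau-1$ and in $S_p$ requires noting that $C_{p,k}$-membership at time $t$ guarantees $x(t+j)\in X$ for $j=1,\ldots,\tau-1$ robustly, so after one time step the shifted trajectory satisfies the analogous $C_{p,k}$ constraints, and $u(t)\in U$, $d_{p,f}(t)\in D_p$ ensure the augmented-state updates remain in $S_p$. These three facts together yield $C_{p,ext}\subseteq C_{aug}^{prev}$.

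For the maximality direction, I would argue by contradiction: pick $(x(0),u_1(0),\ldots,d_{p,p}(0))\in C_{aug}^{prev}\setminus C_{p,ext}$. Since any controlled invariant subset of $S_p$ necessarily satisfies the one-shot safety constraints $C_{p,0},\ldots,C_{p,\tau-1}$ (otherwise $x(k)\notin X$ could already be forced for some $k<\tau$), the missing constraint must be $C_{p,\tau}$, i.e.\ $\widehat{x}_\tau(0)\notin\widehat{C}_p$. Applying Proposition~\ref{prop:finite_N} to $\Sigma_{aux}^{prev}$ with safe set $\widehat{X}$ produces a finite horizon $N$ and an adversarial strategy on $d_0,d_{p,f}$ that, against every $u$, drives the $\widehat{x}_\tau$-trajectory out of $\widehat{X}$ at some step $s\le N$. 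Translating this back via~\eqref{eqn:x_pred2} and~\eqref{eqn:approx_x} (adapted to the preview setting, where the prediction error is exactly $\sum_{i\in\{0,p\}}\sum_{j=1}^{\tau-i}A^{j-1}F_i D_i$), the same disturbance realization together with any continuation of the non-measurable disturbances on $[s,s+\tau-1]$ forces $x(s+\tau)\notin X$ regardless of the inputs $u(s),\ldots,u(s+\tau-1)$, contradicting invariance of $C_{aug}^{prev}$.

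The main obstacle I anticipate is the translation step between the adversary on $\Sigma_{aux}^{prev}$ and a consistent adversary on $\Sigma_{aug}^{prev}$: in the preview setting the adversarial $d_{p,f}$ values chosen at times $0,\ldots,s-1$ must be realized as actual $d_p$-values at times $p,\ldots,s+p-1$ on the original system, and one must check that they are still available (i.e.\ unrevealed) to the controller whose choices are being defeated, and that the extra preview terms in $\widehat{x}_\tau$ match the state updates of $\Sigma_{aug}^{prev}$. Once this bookkeeping is done, the condition $p\le\tau$ is what guarantees that the previewed disturbances appearing in $\widehat{x}_\tau(s)$ are exactly the components $d_{p,1}(s),\ldots,d_{p,p}(s)$ of the augmented state at time $s$, closing the argument as in Theorem~\ref{thm:1}.
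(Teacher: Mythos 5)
Your proposal is correct and takes exactly the approach the paper intends: the paper omits the proof of Theorem~\ref{thm:2} entirely, stating only that it ``can be easily extended from the proof of Theorem~\ref{thm:1},'' and your argument is precisely that extension, with the two directions (invariance of $C_{p,ext}$, then maximality by contradiction via Proposition~\ref{prop:finite_N} applied to $\Sigma_{aux}^{prev}$) mirroring the delay-only case. You also correctly identify and resolve the one genuinely new bookkeeping issue --- realizing the adversarial $d_{p,f}$ choices as consistent, still-unrevealed $d_p$ values on $\Sigma_{aug}^{prev}$, which is where $p\le\tau$ is used --- so your writeup is in fact more complete than the paper's.
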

\begin{proof}
The proof for Theorem \ref{thm:2} can be easily extended from the proof of Theorem \ref{thm:1}, omitted for brevity. 
\end{proof}

{\color{black} Similar to the preceding section, we have the following corollary.
\begin{corollary}
	If $\widehat{C}_{p}$ is a controlled invariant set of  $\Sigma_{aux}^{prev}$ in $X \ominus \sum_{i\in\{0,p\} } \sum^{\tau-i}_{j=1} A^{j-1} F_{i}D_{i}  $, $C_{p,ext}$ is a controlled invariant set of $\Sigma_{aug}^{prev}$ in $S_{p}$.
\end{corollary}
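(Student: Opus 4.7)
The plan is to mimic the first half of the proof of Theorem \ref{thm:1} step by step, replacing $\widehat{x}_\tau$, $\widehat{C}$, and the constraint sets $C_k$ by their previewed counterparts $\widehat{x}_\tau$ (as redefined in \eqref{eqn:x_pred2}), $\widehat{C}_p$, and $C_{p,k}$, while carefully bookkeeping the shift dynamics of the previewed disturbance states $d_{p,1},\ldots,d_{p,p}$. Concretely, I would pick an arbitrary state $\xi(t) = (x(t), u_{1}(t),\ldots,u_{\tau}(t), d_{p,1}(t),\ldots,d_{p,p}(t)) \in C_{p,ext}$ and aim to exhibit a control input $u(t)\in U$ such that for every realization of the non-measurable disturbances $d_0(t)\in D_0$ and $d_{p,f}(t)\in D_p$, the successor state $\xi(t+1) = (x(t+1), u_{2}(t),\ldots,u_{\tau}(t), u(t), d_{p,2}(t),\ldots,d_{p,p}(t), d_{p,f}(t))$ also lies in $C_{p,ext}$.

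The first step is to use the constraint $\xi(t)\in C_{p,\tau}$, which by definition gives $\widehat{x}_{\tau}(t)\in \widehat{C}_p$. Since $\widehat{C}_p$ is assumed controlled invariant for $\Sigma_{aux}^{prev}$ within $\widehat{X}$, there exists $u(t)\in U$ such that $\widehat{x}_{\tau}(t+1)\in \widehat{C}_p$ for every admissible $d_0(t)$ and $d_{p,f}(t)$; this choice of $u(t)$ will be the one we use. The crucial identity to verify here is that the image of $\widehat{x}_\tau$ evaluated on the shifted state $\xi(t+1)$ under formula \eqref{eqn:x_pred2} coincides exactly with the successor produced by the auxiliary dynamics \eqref{eqn:sys_3}; this is a direct algebraic expansion analogous to the derivation of $\Sigma_{aux}$, and it yields $\xi(t+1)\in C_{p,\tau}$.

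Next, I would establish $\xi(t+1) \in \bigcap_{k=0}^{\tau-1} C_{p,k}$. The constraints defining $C_{p,k}$ at the shifted state are precisely the requirement that, for all admissible non-measurable disturbances over $[t+1, t+k]$, one has $x(t+1+k)\in X$. But this conclusion is already forced by $\xi(t) \in \bigcap_{k=0}^{\tau} C_{p,k}$: for $k+1\le \tau-1$ it follows from $\xi(t)\in C_{p,k+1}$, and for $k+1 = \tau$ it follows from $\widehat{x}_\tau(t)\in \widehat{C}_p \subseteq \widehat{X}$ together with the construction of $\widehat{X}$ (which is exactly the $X$-shrinkage needed to absorb the accumulated disturbance envelope up to horizon $\tau$). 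The main index bookkeeping to verify is that the preview states $d_{p,j}(t+1) = d_{p,j+1}(t)$ correctly re-index the summation $\sum_{j=\max\{1,k-p+1\}}^{k} A^{j-1} F_p d_{p,k-j+1}(\cdot)$ appearing in the definition of $C_{p,k}$ when comparing its evaluation at time $t$ and $t+1$; this is the step I expect to be most delicate, as it is where the previewed information is ``consumed'' one step at a time. Finally, membership in $S_p = X\times U^{\tau}\times D_p^p$ is immediate: $x(t+1)\in X$ from $\xi(t)\in C_{p,1}$, $u(t)\in U$ by construction, $u_{2}(t),\ldots,u_{\tau}(t)\in U$ since $\xi(t)\in S_p$, and $d_{p,2}(t),\ldots,d_{p,p}(t), d_{p,f}(t)\in D_p$ for the same reason together with the admissibility of $d_{p,f}(t)$. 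Combining these three facts gives $\xi(t+1)\in C_{p,ext}$, completing the controlled invariance argument.
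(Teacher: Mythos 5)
Your proposal is correct and follows exactly the route the paper intends: the paper leaves this corollary's proof implicit (``similar to the preceding section''), meaning one replays the first half of the proof of Theorem \ref{thm:1} with $\widehat{x}_\tau$ from \eqref{eqn:x_pred2}, $\widehat{C}_p$, and the sets $C_{p,k}$, which is precisely what you do, including the two points that actually need checking (the shift identity matching \eqref{eqn:sys_3} and the $k+1=\tau$ case handled via $\widehat{C}_p\subseteq\widehat{X}$). No gaps.
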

} 

\section{Examples} \label{sec:example} 
 %!TEX root = ms.tex
 
 The algorithms are implemented in MATLAB 2018b on a computer equipped with Intel i7-8650U CPU and $16$ GB memory. We use implementations from MPT3 toolbox \cite{MPT3} for the polytope operations in the algorithms.

\subsection{Numerical Example}
In this section, we use a toy example to show how much performance improvement is achieved by applying the proposed method. 

Consider the following $1$-dimensional system:

\begin{align}
x(t+1) = 1.5x(t) + u(t-\tau) + d(t)
\end{align}
where $x(t) \in \R$, $u(t)\in [-20,20]$ and $d(t)\in [-2,2] $ with $p$-step preview. The safe set for $x$ is taken to be $[-32,32]$.

In Table \ref{tab:time_cmp}, we compare the computation time of the maximal invariant sets for different $\tau$ and $p$ using two methods: the proposed method and the fixed-point algorithm in  \eqref{eqn:fixed_point} operating on the augmented system. We call the later the direct method for short. {\color{black} We note that the iterations \eqref{eqn:fixed_point} terminates in finite number of steps in all of the examples in the table.} There are two important observations. 

First, for each  $\tau$ in Table \ref{tab:time_cmp}, $p$ is selected as the smallest preview length that makes the maximal invariant set nonempty. The increasing trend on $p$ in Table  \ref{tab:time_cmp} implies that if we do not have any preview on disturbance, the controlled invariant set becomes empty very soon as $\tau$ increases. That reveals how preview on disturbance reduces conservativeness for input-delay systems, which is why we take preview into consideration in Section \ref{sec:extension}. 

Second, according to the last two columns of Table \ref{tab:time_cmp}, the computation time with the direct method increases drastically as $\tau$ increases, while the computation time for the proposed method just increases slightly. This is because the dimension of the reduced-order system does not change as $\tau$ and $p$ increase. Our method is apparently more efficient than the direct method in this example.

\begin{table}
	\centering
	~\\
	\caption{Time required to compute an invariant set with the proposed method and the direct method. }
	\label{tab:time_cmp}
	\begin{tabular}{cccc}
\hline
$\tau$   &  $p$  & proposed method ($s$) &  direct method ($s$) \\
\hline
$1$  & $0$ & $0.7705$ & $0.5960$ \\ 
$5$  & $ 1 $ & $ 0.8779 $ & $ 7.7573 $\\
$ 10 $ & $ 6 $ & $1.1548 $ &$  98.3379 $\\
$15$ & $11$  & $ 1.6999$ &  $   525.7656$\\ 
$20$ & $16$ & $ 3.0460 $ & $ 1.6217\times 10^3 $\\
\hline
	\end{tabular}
\end{table}

\subsection{Vehicle Lane Keeping Control}\label{sec:LK_ex}
In this example, the proposed method is applied to synthesize a controller that guarantees the safety of a vehicle in a lane-keeping scenario. The goal of lane keeping is to control the vehicle to follow the center line of the road. The safety requirement is to make sure the lateral displacement, the lateral velocity, yaw angle and yaw rate of the vehicle with respect to the road center are within given bounds so that the vehicle does not leave the target road, spin, or rollover. 

The vehicle dynamics considered is linearized from a bicycle model \cite{smith} and discretized by forward Euler method with time step $h=0.1s$. The longitudinal velocity  $v_{d}$ is fixed and equal to $30m/s$.  The state of the system consists of the lateral displacement $y$ between the vehicle center and the road center, the lateral velocity $v$, the yaw angle $\Delta \Psi$ and the yaw rate $r$ of the vehicle, denoted by $x = [y,v,\Delta \Psi, r] $. The dynamics $ \Sigma_{car} $ of $x$ is 
\begin{align}
	x(t+1)= (I+A\cdot h)x(t)+ Bh \delta_f(t-\tau) + Fh r_{d}(t) \label{eqn:vehicle} 
\end{align}
with $I$ equal to the identity matrix and 
\begin{align*}\footnotesize
    A = \begin{bmatrix}
    0  & 1  & u  & 0\\
	0  & - \frac{C_{\alpha f}+C_{\alpha r}}{mu}  & 0  & \frac{bC_{\alpha r}- a C_{\alpha f}}{mu}-u \\
	0  &  0 & 0 & 1\\
	0  & \frac{bC_{\alpha r}- a C_{ \alpha f}}{I_{z} u}  & 0  & - \frac{a^{2}C_{ \alpha f}+ b^{2}C_{ \alpha r}}{I_{z}u} 
    \end{bmatrix}, B= \begin{bmatrix}
        0\\
		\frac{C_{\alpha f}}{m}\\
		0 \\
		a \frac{C_{\alpha f}}{I_{z}}
	\end{bmatrix}, F = \begin{bmatrix}
        0 \\
        0 \\
        -1 \\
        0	
	\end{bmatrix}, 
\end{align*}
where the steering angle $\delta_{f}\in [-\pi/2,\pi/2]$ is the control input with $\tau$-step delay and the desired yaw rate $r_{d}$ is a disturbance with $p$-step preview ($p\leq \tau$).   The parameters in $A$, $B$ matrices are taken from \cite{smith}. According to \cite{michiganroad}, the maximal range of $r_{d} $  with respect to $v_{d}= 30m/s$ in Michigan is $ D = [-0.05,0.05]$. Desired yaw rate $r_{d}$ is a function of the road curvature, which can be measured with a forward looking camera or acquired from a map ahead of time. Therefore it is reasonable to assume that $r_{d}$ is a disturbance with preview. 

The safe region $X$ of dynamics \eqref{eqn:vehicle} is given by bounds $ \vert y \vert \leq 0.9 $, $ \vert v \vert \leq 1.2 $, $ \vert \Delta \Psi \vert \leq 0.05 $ and $ \vert r \vert  \leq 0.3$. For $\tau = 10, p = 8$, our method takes $249 s$ to compute the maximal controlled invariant set of the $22$-dimensional augmented system within the safe set $X \times U^{10} \times D^{8}$. By fixing $r $, $ u_1 $, $\ldots$, $u_{10} $, $d_1 $, $\ldots$, $d_{8}$ to be zero, we make a $3$-dimensional slice of the $22$-dimensional polytope, shown in Figure~\ref{fig:slice}. The red region in Figure~\ref{fig:slice} contains all the feasible initial values of the first three coordinates $( y, v, \Delta \Psi)$ from which it is possible to guarantee safety, when the other coordinates have initial value equal to $0$.

\begin{figure}[h]
	\centering
	\includegraphics[width=0.5\textwidth]{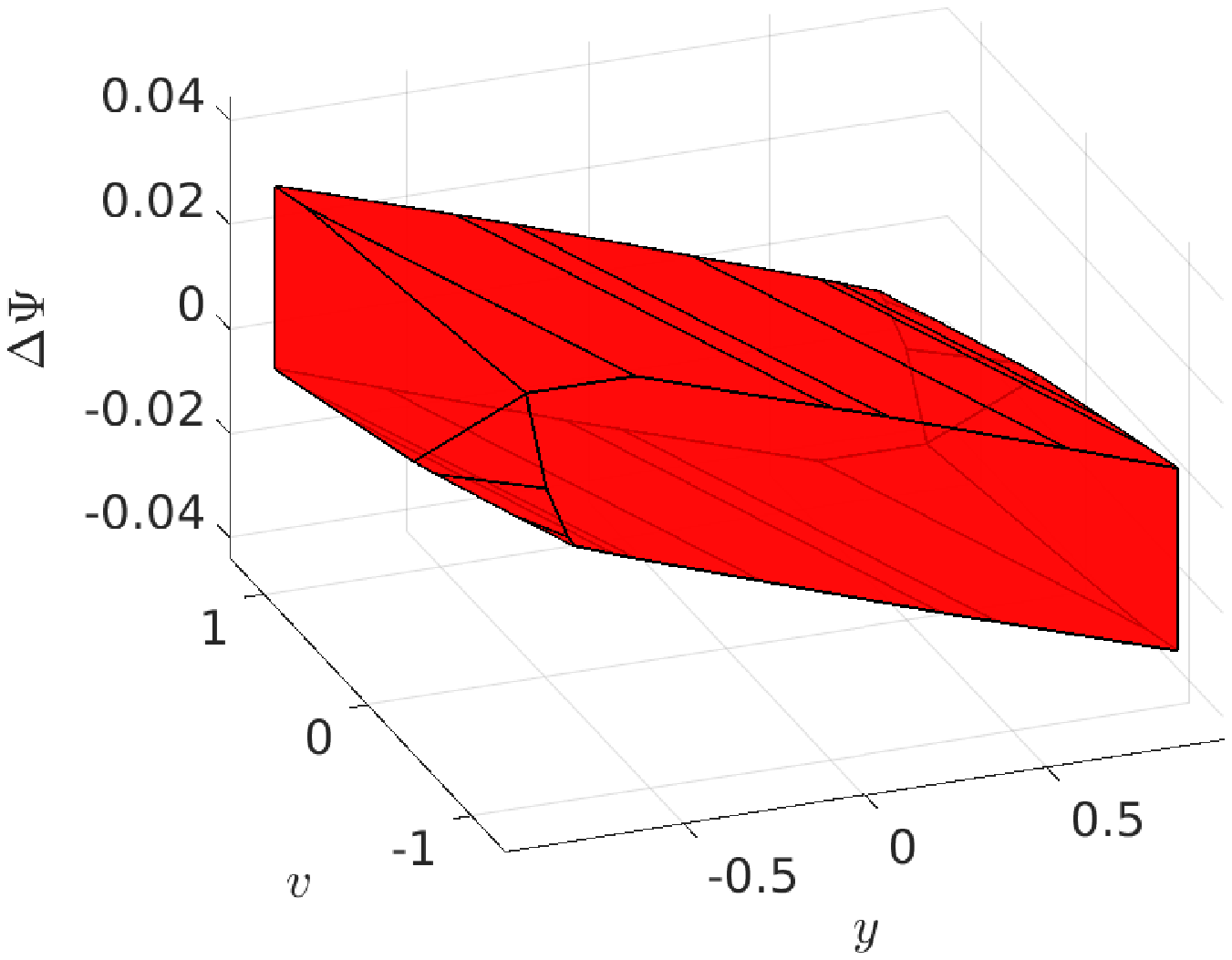}
	\caption{A slice of the maximal controlled invariant set.}
	\label{fig:slice}
\end{figure}

Once the maximal controlled invariant set $C$ is obtained, the admissible input set with respect to a state in $C$ is the set of inputs that make the next state within $C$ robust to any disturbances in $D$. A safety supervisor for a legacy vehicle controller or human-driver can be implemented by checking if the controller's output is within the admissible input set at each time and making appropriate adjustment \cite{nilsson2015correct}.

{\color{black} We run a simulation under the supervisory control framework using the maximal controlled invariant set of $ \Sigma_{car} $. In the simulation, $r_{d}$ is given by a sine function over time. A legacy controller $ u_n $ of the vehicle is obtained by solving a Linear Quadratic Regulator problem for the augmented system of $ \Sigma_{car} $. The supervisor is implemented by projecting the output of $ u_n $ to the admissible input set given by the maximal RCIS of $ \Sigma_{car} $ at each time step. As a baseline, we first assume that the invariant set designer is either unaware of the existence of the delay and preview or simply ignores them and implements the supervisor using the maximal controlled invariant of $ \Sigma_{car} $ with zero delay and no preview. Then, another supervisor is implemented based on the maximal controlled invariant set of $ \Sigma_{car} $ with the actual delay steps and preview steps. A sample trajectory of the closed-loop systems equipped with the first and second supervisors are compared in Figure \ref{fig:simu}, indicated by red and blue curves. The red trajectory terminates at $ 4.9 $s because at that time the system equipped with the first supervisor reaches the unsafe region. In contrast, the system equipped with the second supervisor stays within the safety bounds all the time. Comparing the two different simulation results, it can be seen that simply ignoring the delay can lead to unsafe situations. It is also worth noting that the invariant set becomes empty in this example when taking the preview time $p$ to be zero while keeping the delay time as is. In fact, for any value of $p<8$, the invariant set is empty. This indicates the value of preview in coping with uncertainty for systems with input delays. }

\begin{figure}
	\centering
	\includegraphics[width=0.4\textwidth,trim=0 0 0 -10]{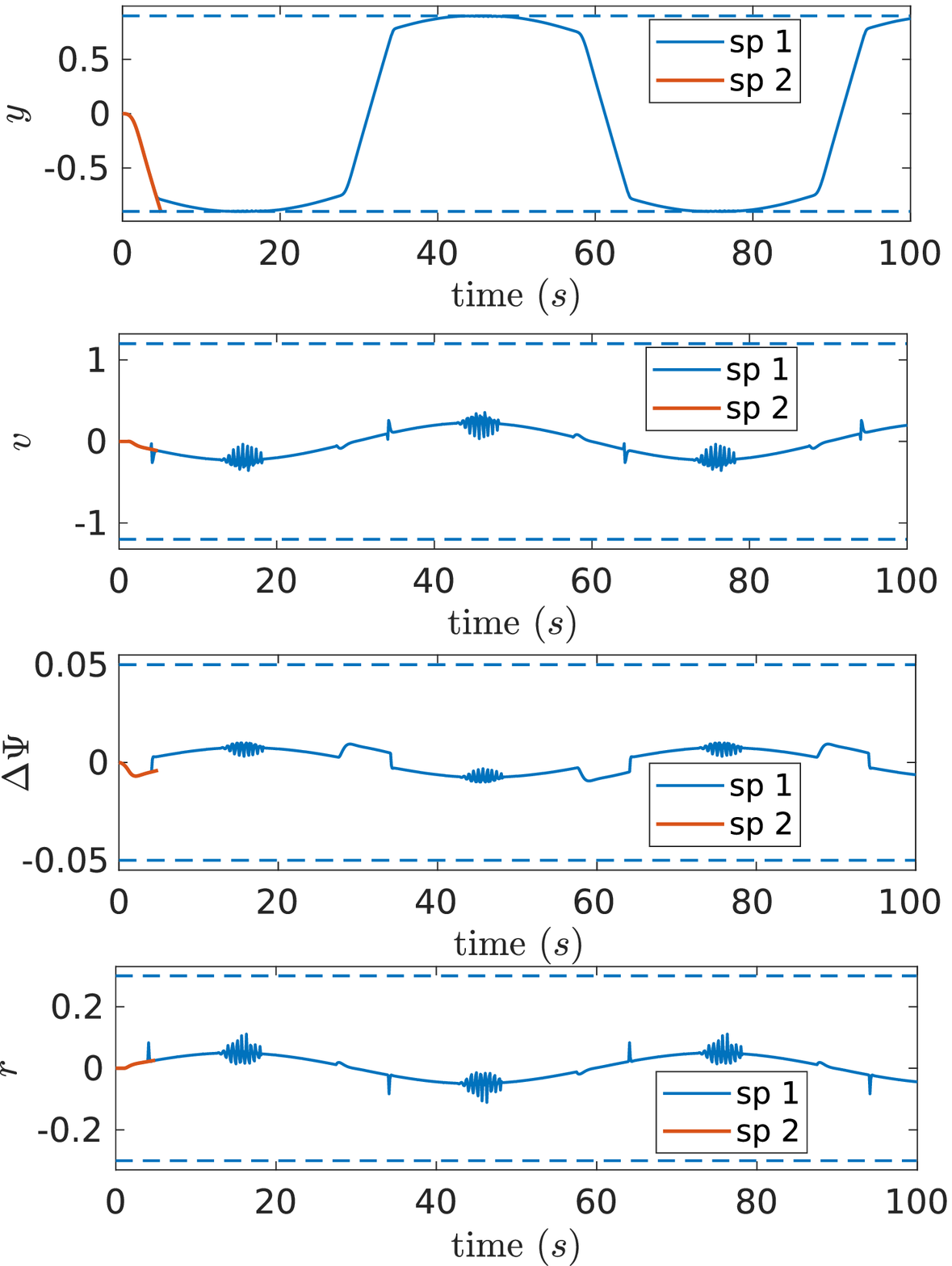}
	\caption{Trajectories of the supervisory control simulation. The safety bound on each coordinates are indicated by the dash lines. The red and blue trajectories correspond to supervisors designed with different knowledge on the delay time.}
	\label{fig:simu}
\end{figure}

\section{Conclusions}
In this paper we propose a scalable method for computing controlled invariant sets for linear systems subject to input delays. This method is extended to incorporate preview information while preserving the scalability properties. Both of the problems studied are motivated by safety control problems in automotive domain, yet we believe the proposed methods are broadly applicable. Our current work focuses on understanding the robustness of the approach to uncertainties in the delay time. We are also interested in time-varying delays where the correctness and maximality guarantees will depend on the protocol that resolves missing or clashing input packets.

\section*{Acknowledgment}
The authors would like to thank Kevin Zaseck from TRI for valuable discussions motivating this work.
\balance
\bibliographystyle{IEEEtran}
\bibliography{acc}

\end{document}